\let\dist\thicksim
\newcounter{bibcount}
\patchcmd{\@lbibitem}{\item[}{\item[\hfil\stepcounter{bibcount}{\thebibcount.}}{}{}
\renewcommand\NAT@bibsetup%
\begin{document}

% \title{QUBO-based image denoising via restricted Boltzmann machines and quantum annealing}
%\title{An Image-Denoising Framework Fit for Quantum Annealing via QUBO and Restricted Boltzmann Machines}
\title{Quantum Image Denoising: A Framework via Boltzmann Machines, QUBO, and Quantum Annealing}

\author[1,2,3]{Phillip Kerger}
\author[4,5]{Ryoji Miyazaki }
        \affil[1]{Department of Applied Mathematics and Statistics, Johns Hopkins University.}
        \affil[2]{Quantum Artificial Intelligence Laboratory, NASA Ames Research Center, Moffett Field CA 94035, USA.}
        \affil[3]{Research Institute of Advanced Computer Science, USRA, Moffett Field CA 94035, USA.}
        \affil[4]{
        Secure System Platform Research Laboratories, NEC Corporation, Kawasaki, Kanagawa 211-8666, Japan.
        }
\affil[5]{NEC-AIST Quantum Technology Cooperative Research Laboratory, National Institute of Advanced Industrial Science and Technology
(AIST), Tsukuba, Ibaraki 305-8568, Japan.}

\date{\today}

\maketitle

\begin{abstract}
    We investigate a framework for binary image denoising via restricted Boltzmann machines (RBMs) that introduces a denoising objective in quadratic unconstrained binary optimization (QUBO) form and is well-suited for quantum annealing. The denoising objective is attained by balancing the distribution learned by a trained RBM with a penalty term for derivations from the noisy image. We derive the statistically optimal choice of the penalty parameter assuming the target distribution has been well-approximated, and further suggest an empirically supported modification to make the method robust to that idealistic assumption. We also show under additional assumptions that the denoised images attained by our method are, in expectation, strictly closer to the noise-free images than the noisy images are. While we frame the model as an image denoising model, it can be applied to any binary data. As the QUBO formulation is well-suited for implementation on quantum annealers, we test the model on a D-Wave Advantage machine, and also test on data too large for current quantum annealers by approximating QUBO solutions through classical heuristics.
\end{abstract}

\section{Introduction}\label{sec: introduction}
Quantum annealing (QA)~\cite{T.Kadowaki1998, A.Das2008, T.Albash2018Jan} is a promising technology for obtaining good solutions to difficult optimization problems, by making use of quantum interactions to aim to solve Ising or quadratic unconstrained binary optimization (QUBO) instances. Since Ising and QUBO instances are NP-hard, and many other combinatorial optimization problems can be reformulated as Ising or QUBO instances (see e.g. \cite{QuboFormulationTutorial_glover2018}), QA has the potential to become an extremely useful tool for optimization. As the capacities of commercially available quantum annealers continue to improve rapidly, it is of great interest to build models that are well-suited for this emerging technology. Furthermore, QA has promising machine learning applications surrounding Boltzmann Machines (BMs), as both QA and BMs are closely connected to the Boltzmann distribution. Boltzmann Machines are a type of generative artificial neural network that aim to learn the distribution of some training data set by fitting a Boltzmann distribution to the data, as described thoroughly in \cite[\S 20]{deep_learning_bengio}. On the other hand, QA aims to produce approximate minimum energy (maximum likelihood) solutions to a Boltzmann distribution via finding the ground state of the associated Hamiltonian that determines the distribution. Hence, maximum likelihood type problems on BMs are a natural candidate for applying QA in a machine learning framework. 
We contribute to the goal of furthering useful applications of QA in machine learning in this paper by building an image denoising model particularly well-suited for implementation via QA. 

The task of image denoising is a fundamental problem in image processing and machine learning. In any means of collecting images, there is always a chance of some pixels being afflicted by noise that we wish to remove; see e.g. \cite{Noise_models_in_images_review} for a good overview. Accordingly, many classical and data-driven approaches to the image denoising problem have been studied in the literature \cite{image_denoising_survey_buades2005, TangHinton2012RobustBM_denoising, Greig1989_graph_cut_denoising, rudin1992nonlinear, BM_denoising}. This paper studies a quantum binary image denoising model using Restricted Boltzmann Machines (RBMs henceforth) \cite[\S 20.2]{deep_learning_bengio} that can take advantage of QA by formulating the denoising problem as a QUBO instance. Specifically, given a trained RBM, we introduce a penalty-based denoising scheme that admits a simple QUBO form, for which we derive the statistically optimal penalty parameter as well as a practically-motivated robustness modification. The denoising step only needs to solve a QUBO admitting a bipartite graph representation, and so is well-suited for QA. 
As QA has also shown promise for training BMs \cite{adachi2015_QA_for_DeepNets, Dixit_2021}, our full model lends itself well for denoising images using quantum annealers, and could thus play a role in the their future applications since QA can then be leveraged for \textit{both }the training and denoising steps. The model also shows promise in absence of QA, and our insights presented are not limited to the QA framework, as the QUBO formulation of the denoising problem and its statistical properties we prove may be of independent interest.

The paper is organized as follows. Section \ref{background} gives a summary of background on quantum annealing and Boltzmann Machines. Section \ref{sec: methods} describes our main contribution of the image denoising model for QAs, and Section \ref{results} shows some practical results obtained. 

\begin{remark}
    We frame our work as a binary image denoising method, although the framework does not depend on the data being images, and can be applied to the denoising of any binary data. This is because the framework does not use any spatial relationships between the pixels, and instead treats the image as a flattened vector whose distribution is to be learned. Hence, the denoising scheme can be applied as-is to any other binary data setting. 
\end{remark}

\subsection{Contributions and Organization}
We provide QUBO-based denoising method for binary images (applicable to general binary data) using restricted Boltzmann machines in Section \ref{sec: methods}. This is done by formulating the denoising objective in equation \ref{eq: penalty model} by combining the energy function of the distribution learned by the RBM with a (parameterized) penalty term for deviations from a given noisy image. This objective turns out to have an equivalent QUBO formulation, which is shown in claim \ref{claim:penalty model is qubo}. In Theorem \ref{thm: rho optimality}, we derive the optimal choice for the penalty parameter under the assumption that the true images follow the distribution learned by the RBM, which also recovers the maximum a posteriori estimate per Corollary \ref{cor: map recovery}, though our model is more flexible, and this flexibility allows for useful practical modifications. Theorem \ref{thm: provable denoising} shows that the denoising method yields a result that is {\em strictly} closer (in expectation) to the true image than the noisy image is, under some additional assumptions. Given that these idealistic assumptions won't be met in reality, we propose a robustness modification in Section \ref{sec: robust rho} that \textit{improves} performance empirically. In Section \ref{results}, as the method lends itself well to quantum annealing, we then implement the method on a D-Wave Advantage 5000-qubit quantum annealer, demonstrating strong empirical performance. Since only small datasets can be tested on the D-Wave machine due to the relatively low number of qubits, we also test the method on a larger dataset, for which we use simulated annealing on a conventional computer in place of quantum annealing to find good solutions the QUBO denoising objective. 
Though we highlight the method being well-suited for quantum annealers, we emphasize that it may be of independent interest to the machine learning and image processing communities at large. 

\subsection{Related Work}
Closely related work of \cite{koshka_reconstruction} uses a similar model as ours for the image reconstruction task, also solving QUBO formulations via quantum annelaing. In the reconstruction task, some subset of pixels is unknown (or obscured or missing), and needs to be restored, whereas our work considers denoising, where which pixels are noise-afflicted is unknown. \cite{Greig1989_graph_cut_denoising} derives a maximum a posteriori (MAP) estimator for the noise free image as a denoising method in a particular model of binary images that is less general than ours, though we would recover their estimator under a particular choice of our penalty parameter if we were to apply our framework to their model (since we recover MAP in a more general setting). Further, RBMS and quantum annealing have been studied for the classification problem, for instance in \cite{Krzysztof2021_QA_RBM_for_MNIST_classification} and \cite{adachi2015_QA_for_DeepNets}.
Other research in the machine learning communities has also studied handling \textit{label noise}, such as related work in \cite{Vahdat2017TowardRA_learning_from_noisy}, which studies the problem of training models in the presence of noisy labels, whereas our approach is entirely unsupervised (the data need not have any labels to begin with).

\begin{section}{Background}\label{background} 
Quantum Annealers make use of quantum interactions with the primary goal of finding the ground state of Hamiltonian by initializing and then evolving a system of coupled qubits over time~\cite{M.Johnson2011}. In particular, we may view QA as implementing the Ising spin-glass model \cite{StatisticalPhysics} evolving over time.
As the QUBO model is equivalent to the Ising model \cite{QuboFormulationTutorial_glover2018}, and QUBO instances can be efficiently transformed to Ising instances, QA is well suited to provide good solutions to QUBO problems. A QUBO cost function, or energy function, takes the form
\begin{align}
f_Q(x) := \sum_{i, j} Q_{ij}x_i x_j
\label{eq:cost}
\end{align}
where $x_i \in \{0,1\}$, and $Q$ is a symmetric, real-valued matrix. We will occasionally refer to $Q_{ij}$ as the $weight$ between $x_i$ and $x_j$. QUBO is well-known to be NP hard~\cite{F.Barahona1982}, and many combinatorial problems can be reformulated as QUBO instances. See  \cite{QuboFormulationTutorial_glover2018, A.Lucas2014} for thorough presentation of QUBO formulations of various problems. A Boltzmann Distribution using the above QUBO as its energy function takes the form 
\begin{align}
P_Q^{model}(x) = \frac{1}{z} \exp(-f(x,Q)), 
\label{eq:bm prob}
\end{align}
where $z$ is a normalizing constant. 
Note that a parameter called inverse temperature has been fixed to unity and is not explicitly shown in the above expression.
In this paper, we will focus on making use of Boltzmann Machines, a type of generative neural network that fits a Boltzmann Distribution to the training data via making use of latent variables. Specifically, we consider Restricted Boltzmann Machines (RBMs), which have seen significant success and frequent use in deep probabilistic models \cite{deep_learning_bengio}. RBMs consist of an input layer of $visible$ nodes, and a layer of latent, or $hidden$ nodes, which each have zero intra-group weights. Let $\vv \in \{0,1\}^v$ and $\h \in \{0,1\}^h$ denote the visible and hidden nodes, respectively. It will be convenient for us to write $x = (\vv,\h) \in \{0,1\}^{v+h}$ as their concatenation. The probability distribution represented by a RBM is then
\begin{align}
P_Q^{model}((\vv, \h)) = \frac{1}{z}\exp(-f((\vv, \h), Q))
\label{eq:likelihood_model}
\end{align}
with the restriction that $Q_{ij} = Q_{ji} = 0$ if 
$i, j \in \{ 1, \dots, v\}$ or $i, j \in \{ v+1, \dots, v+h \}$.
Hence, we have the simplified energy function 
\begin{align} \nonumber
    f((\vv,\h), Q) &= \sum_{i=1}^{v+h}  \sum_{j=1}^{v+h}
    2Q_{ij}(\vv, \h)_i(\vv, \h)_j = \sum_{i=1}^{v} \sum_{j=v+1}^{v+h} Q_{ij} \vv_i \h_j + \sum_{i=1}^{v} Q_{ii} \vv_i^2 + \sum_{i=v+1}^{v+h} Q_{ii}\h_i^2 \\
    &= \h^T W \vv + b_v^T\vv + b_h^T \h
    =: f_{W, b_v,b_h}(\vv, \h) \label{eq: convenient energy f}
\end{align}
where $W$ is the $v \cross h$ matrix consisting of the $Q_{ij}$ weights between the visible and hidden nodes, and $b_v$ and $b_h$ are vectors of the diagonal entries $Q_{ii}$, $i\in \{ 1, \dots, v \}$ corresponding to visible nodes,  and $Q_{ii}, i \in \{n+1, ..., v+h\}$ corresponding to hidden nodes, respectively. We will write the Boltzmann distribution with this energy function as $P_{W, b_v, b_h}$, noting that this is also $P_Q^{model}$ for the appropriate $Q$. \\

 It is well known that RBMs can universally approximate discrete distributions \cite{deep_learning_bengio}, making them a powerful model. They are also more easily trained than general Boltzmann Machines, usually through the contrastive divergence algorithm as described in \cite{contrastive_divergence_Hinton2002}, or variants thereof.

 \subsection{Training Boltzmann Machines}
We first devote some discussion to the training of RBMs. Subsection \ref{subsec: qubo denoise} then describes how to denoise images via QUBO given a well-trained RBM. \\

Continuing with the notation as in equation \ref{eq: convenient energy f}, the probability distribution represented by a RBM is 
$$
P_{\theta}(\vv, \h) 
= \frac{1}{z_\theta} \exp(-f_{\theta}).
$$
For simplicity, denote $\theta = (W, b_v, b_h)$ as the model parameters henceforth. The normalizing constant $z_\theta$ above is $$
z_\theta = \sum_{\vv \in \{0,1\}^v} \sum_{\h\in\{0,1\}^h} \exp(-f_\theta(\vv, \h))
$$
which is becomes intractable quickly even for relatively small values of $v$ and $h$. 
The common training approach aims to maximize the log-likelihood of the data. At a high-level, this will be done by approximating gradients and following a stochastic gradient scheme. 
%Notably, the gradient computation requires sampling from the model distribution, for which one may aim to employ the QA as in \cite{Dixit_2021}. 
However, since our data consists only of the visible nodes, we need to work with the marginal distribution of the visible nodes. This is given by 
$$
P_{\theta}(\vv) = \sum_\h P_{\theta}(\vv, \h) = \sum_\h \frac{ \exp [- f_\theta(\vv, \h) ] }{z_\theta}
$$
Denote our set training data samples by $V := \{\vv^1, ..., \vv^N\}$. We will use superscripts to indicate training data samples, and reserve subscripts to denote entries of vectors. Then the log-likelihood is given by 
\begin{align}
    \nonumber 
    l_{\theta}(V) &= \sum_{k=1}^N \log P_{\theta}(\vv^k) = \sum_{k=1}^N \log \sum_{\h} P_{\theta}(\vv^k,\h) \\ 
    \nonumber
    &= \left ( \sum_k \log \sum_\h \exp (-f_\theta(\vv^k,\h)) \right ) - N \cdot  \log z_\theta \\
    &= \left ( \sum_k \log \sum_\h \exp (-f_\theta(\vv^k,\h)) \right ) - N \cdot \log \sum_{\vv} \sum_{\h} \exp(-f_\theta(\vv, \h)) \label{eq: log likelihood}
\end{align}

Now we can calculate the gradient with respect to $\theta$ as 
\begin{align*}
    \nabla l_\theta(V) 
    &= \sum_{k=1}^N \dfrac{\sum_\h \exp (-f_\theta(\vv^k,\h))\nabla(-f_\theta(\vv^k, \h))}{\sum_\h \exp (-f_\theta(\vv^k,\h))} - N\cdot \dfrac{\sum_{\vv, \h} \exp (-f_\theta(\vv, \h))\nabla(-f_\theta(\vv, \h))}{\sum_{\vv, \h} \exp (-f_\theta(\vv, \h))} \\
    &= \sum_{k=1}^N \mathbb{E}_{P_\theta(\h|\vv^k)} \left[ -\nabla f_\theta(\vv^k, \h) \right] -N\cdot \mathbb{E}_{P_{\theta}(\vv, \h)}\left[  -\nabla f_\theta(\vv, \h) \right] \\
    &= \frac{1}{N} \sum_{k=1}^N \mathbb{E}_{P_\theta(\h|\vv^k)} \left[(\vv^k)^T \h + \vv^k + \h  \right] 
    - \mathbb{E}_{P_{\theta}(\vv, \h)}\left[ \vv^t\h + \vv +\h \right] 
\end{align*}
The first term can be computed exactly and efficiently from the data, since the conditional $P_\theta(\h|\vv)$ admits the simple form $P(\h_j = 1|\vv) = logistic(b_h + (\vv^TW)_j)$; we refer the interested reader to \cite{Dixit_2021} or \cite{deep_learning_bengio} and will focus on the second term. 
Due to its intractability to compute (one would have to sum over all possibilities of $\vv$ and $\h$), the most promising approach is to approximate it by sampling from $P_\theta(\vv, \h)$. Classically, this is done via Gibbs sampling as described in \cite{contrastive_divergence_Hinton2002}. 
However, recent research has also investigated using quantum annealers to sample from the relevant Boltzmann distribution, as suggested in \cite{Dixit_2021}, which would make QAs useful in the training process since obtaining good Gibbs samples can be expensive. We note that together with our framework, QAs show promise to become useful for both the RBM training and the denoising process in the implementation of our method. 
\\
\end{section}

\section{Image Denoising as Quadratic Unconstrained Binary Optimization}\label{sec: methods}
This section is devoted to showing how one can naturally frame the image denoising problem as a QUBO instance over a learned Boltzmann Distribution fit to the data.

\subsection{Denoising via QUBO} \label{subsec: qubo denoise}
Let us assume we are given a trained Restricted Boltzmann Machine 
described in Sec.~\ref{background}.
The model prescribes to each vector $x\in \{0,1\}^{v+h}$ 
the cost $f_Q(x)$ and corresponding likelihood $P_{Q}^{model}(x)$ 
defined in Eqs.~(\ref{eq:cost}) and (\ref{eq:likelihood_model}), respectively.
We will here make the assumption that $P_{Q}^{model}$ describes the distribution of our data. Hence, high likelihood vectors in $P_{Q}^{model}$ correspond to low cost vectors of $f_Q$. In particular, note that finding the maximum likelihood argument in (\ref{eq:bm prob}) corresponds to finding a solution to the QUBO instance in (\ref{eq:cost}). \\
Now, supposing this model, our goal is to reconstruct an image that has been affected by noise. The visible portion of our vector will be considered to be a flattened image with $v$ pixels, black or white corresponding to 0 or 1, respectively, in the binary entries of the vector.

\subsubsection{Noise Model}
We now describe the noise assumptions we will conduct our analysis under. 
\begin{definition}
For $x\in\{0,1\}^v$, we define $x$ {\em afflicted by salt-and-pepper noise of level $\sigma$} as the random variable $\tilde{X}_{x, \sigma}:= (x + \epsilon) mod 2 $, where $\epsilon_i = B_{i}(p) \dist Bern(\sigma)$, independently.
\end{definition}
In other words, a binary image afflicted by salt-and-pepper noise has each pixel independently flipped with probability $\sigma$. In particular, we are interested in $\tilde{X}_{X, \sigma}$, where $X\dist  P_{Q}^{model}$, which is the compound random variable obtained by sampling $X$ from the learned distribution of the data and then afflicting it with salt-and-pepper noise. For notational simplicity, will simply write $\tilde{X}$ when the intended subscripts are clear from context. Note that salt-and-pepper noise is a natural noise model for binary data, since the only means in which pixels (or data entries, for general binary data) can be changed is by flipping the $0-1$ value.

Suppose we are given a realization $\tilde{x}\in \{0,1\}^v$ of $\tilde{X}_{X, \sigma}$.
The reconstruction process aims to retrieve this original $X$ using $\tilde{x}$ and the trained model through $Q$. 
The approach we will take begins from the intuition that $X$ is likely to be a high-likelihood image that is close to $\tilde{x}$.
To enforce this "closeness" to $\tilde{x}$ while searching for higher likelihood images in our model to remove noise, we add to the cost in (\ref{eq:cost}) a penalty for deviations from $\tilde{x}$ to formulate the following natural denoising cost function: 
\begin{align}
    f_{Q, \tilde{x}, \rho}(x) = f_Q(x) + \rho \sum_{i, j}(x_i - \tilde{x}_i)^2
    \label{eq: penalty model}
\end{align}
for some $\rho > 0$ that determines the penalty level. 
The intuition is that the minimizer of this function for a well-chosen $\rho$ will change a restricted number of pixels to find an image that is similar to the noisy image, but has a lower cost, i.e. higher likelihood, under the model, in hopes of removing the noise. 

We show next that this minimizing (\ref{eq: penalty model}) corresponds to solving a QUBO instance. 
\begin{claim}\label{claim:penalty model is qubo} Defining $\tilde{Q}^{\rho, \tilde{x}} \in \mathbb{R}^{(v+h)\times (v+h)}$ by setting $\tilde{Q}_{ij}^{\rho, \tilde{x}} = Q_{ij}$ if ${i \neq j}$ and $\tilde{Q}_{ij}^{\rho, \tilde{x}} = Q_{ii} + \rho(1-2\tilde{x})$ if $i = j$, we have
\begin{align}
    argmin_x f_{Q, \tilde{x}, \rho}(x) = argmin_x f_{\tilde{Q}^{\rho, \tilde{x}}}(x).
    \label{eq: penalty reconstruct}
\end{align}
\end{claim}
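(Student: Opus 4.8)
The plan is to verify the stated equality of minimizers by a direct algebraic reduction, and the whole argument rests on one elementary fact: for binary variables, $x_i^2 = x_i$ for every $x_i \in \{0,1\}$. This idempotency is what lets linear terms and ``diagonal'' quadratic terms be interchanged freely on the feasible set $\{0,1\}^{v+h}$, and it is the reason the penalty collapses into a modification of the diagonal of $Q$ rather than introducing genuinely new quadratic structure.

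First I would expand the penalty term coordinate-wise. Writing it as a sum over the coordinates carrying a noisy reference value,
\[
\rho \sum_i (x_i - \tilde{x}_i)^2 = \rho \sum_i \left( x_i^2 - 2 x_i \tilde{x}_i + \tilde{x}_i^2 \right) = \rho \sum_i x_i (1 - 2\tilde{x}_i) + \rho \sum_i \tilde{x}_i^2,
\]
where the second equality uses $x_i^2 = x_i$ on the first summand and isolates the $x$-independent remainder $\rho \sum_i \tilde{x}_i^2$. Next I would add this to $f_Q(x) = \sum_{i \neq j} Q_{ij} x_i x_j + \sum_i Q_{ii} x_i^2$ and, again via idempotency, merge the diagonal contribution $\sum_i Q_{ii} x_i$ with the new linear term $\rho \sum_i x_i(1 - 2\tilde{x}_i)$ into $\sum_i \bigl( Q_{ii} + \rho(1-2\tilde{x}_i)\bigr) x_i$. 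Promoting each $x_i$ back to $x_i^2$ identifies this sum precisely as the diagonal of $\tilde{Q}^{\rho,\tilde{x}}$, with the off-diagonal entries inherited unchanged from $Q$. The conclusion is that
\[
f_{Q,\tilde{x},\rho}(x) = f_{\tilde{Q}^{\rho,\tilde{x}}}(x) + \rho \sum_i \tilde{x}_i^2 \qquad \text{for all } x \in \{0,1\}^{v+h}.
\]

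Finally, since the two cost functions differ only by the additive constant $\rho \sum_i \tilde{x}_i^2$, which is independent of $x$, they are minimized at exactly the same arguments, giving the claimed identity of argmin sets. I do not expect a genuine obstacle here; the only points requiring care are bookkeeping ones — confirming that the $x_i \leftrightarrow x_i^2$ substitutions are legitimate \emph{because} we are restricted to binary $x$ (they would fail on a continuous relaxation), and checking that coordinates with no associated reference value, such as the hidden nodes, simply retain their original diagonal entry $Q_{ii}$ so that $\tilde{Q}^{\rho,\tilde{x}}$ is well defined on all of $\{0,1\}^{v+h}$.
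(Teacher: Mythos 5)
Your proof is correct and follows essentially the same route as the paper's: expand the penalty $\rho\sum_i(x_i-\tilde{x}_i)^2$, use the binary idempotency $x_i = x_i^2$ to fold the resulting linear term into the diagonal of $Q$, and observe that the leftover $\rho\sum_i \tilde{x}_i^2$ is an additive constant, so the argmins coincide. Your closing remark about hidden coordinates retaining their original diagonal entries is a point the paper glosses over, but it changes nothing substantive.
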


\begin{proof}
\begin{align*}
   f_{Q, \tilde{x}, \rho} (x) &= f_Q(x) + \rho \sum_{i}(x_i - \tilde{x}_i)^2 
    = \sum_{i, j} Q_{ij}x_ix_j + \rho \sum_i x_i^2 - 2x_i\tilde{x}_i + \tilde{x}_i^2 \\
    &= \sum_{i \neq j} Q_{ij}x_ix_j + 
    \sum_i Q_{ii}x_i^2 + \rho (x_i^2 - 2x_i^2\tilde{x}_i + \tilde{x}^2) \\
    &= \sum_{i \neq j} Q_{ij}x_ix_j 
    + \sum_i (Q_{ii} + \rho(1 - 2\tilde{x}_i)) x_i^2 + \rho\tilde{x}_i^2 
    = f_{\tilde{Q}^{\rho, \tilde{x}}}(x) + \sum_i \rho\tilde{x_i}
\end{align*}
 Noting  that $x_i = x_i^2$ for the above derivation since they are in $\{0,1\}$ here. Since the $\tilde{x}_i$ terms do not depend on $x$, the claim follows.
\end{proof}
Hence, solving the QUBO in on the right hand side of equation \ref{eq: penalty reconstruct} gives us the solution to \ref{eq: penalty model}. 
Claim \ref{claim:penalty model is qubo} thus tells us that we simply need to modify the diagonal of the original matrix $Q$ of our model by adding $diag(1-2\tilde{x}_1, ..., 1-2 \tilde{x}_n)$ and then solve the resulting QUBO to get the denoised image. 
We can then make use of quantum annealing to solve the resulting QUBO of \ref{eq: penalty reconstruct}, or use classical methods and heuristics like simulated annealing instead. 
We formally spell out the denoising procedure in algorithm \ref{alg: main denoising}. 
\begin{mdframed}\namedlabel{alg: main denoising}{\texttt{QUBO\_Denoise}}\\
Input: A matrix $Q$, a noisy image $\tilde{x}$ sampled from the distribution of $\tilde{X}_{X, \sigma}$ with $X\dist P_{Q}^{model}$, and a penalty parameter $\rho >0$.\\
Output: A denoised image $X^*_{\rho, \tilde{x}, Q}$.
\begin{enumerate}
    \item Set $\tilde{Q}_{ij}^{\rho, \tilde{x}} = Q_{ij}$ if ${i \neq j}$ and $\tilde{Q}_{ij}^{\rho, \tilde{x}} = Q_{ii} + \rho(1-2\tilde{x})$ if $i = j$.
    \item Set $X^*_{\rho, \tilde{x}, Q} := argmin_x f_{\tilde{Q}^{\rho, \tilde{x}}}(x)$.
\end{enumerate}
\end{mdframed}

For the remainder of the paper, $X^*_{\rho, \tilde{x}, Q}$ will denote the denoised image obtained by applying \ref{alg: main denoising} with noisy image $\tilde{x}$, penalty parameter $\rho$, and the distribution-defining matrix $Q$.

\begin{remark}
 Considering the entire process of sampling a noisy image and then denoising it, the measurability of $X^*_{\rho, \tilde{X}_{X, \sigma}, Q}$ is inherited from the measurability of $\tilde{X}_{X, \sigma}$, which in turn inherits its measurability as compound random variable of the measurable noise and original image $X \dist P_{Q}^{model}$.
\end{remark}

\subsection{Optimal Choice of penalty parameter $\rho$ }
%and Maximum a Posteriori Estimation
The choice of the parameter $\rho$ for the proposed image denoising model is clearly crucial to its success, since different choices will result in different solutions. 
If $\rho$ is chosen to be too small, there is very little cost to flipping a pixel, and then many pixels may be flipped and the solution may not resemble the noisy image at all anymore. If $\rho$ is too large, we may be too heavily penalizing flipping pixels, and thus may not be able to get rid of noise effectively. Hence, we now turn towards finding the optimal choice for $\rho$. We will evaluate the choice of $\rho$ via {\em expected overlap}: 
\begin{definition}
The \textit{expected overlap} between two distributions $P$ and a $P'$, is defined by 
$$
d(P, P') := \mathbb{E}_P \mathbb{E}_{P'}\left[n- \norm{X - X'}_1\right], $$
where $X\dist P, X'\dist P'$.
\end{definition} 
We will consider $X\dist P_{Q}^{model}$, and $X'$ as $X^*_{\rho, \tilde{X}_{X, \sigma}, Q}$ the corresponding denoised image, and will also  call $d(P, P')$ the \textit{expected overlap between $X$ and $X'$}. 
To keep notation simple, for the remainder of this section allow us to write $\tilde{X}$ in place of $\tilde{X}_{X, \sigma}$, with $X$ and $\sigma$ being clear from context.

Our main positive result concerning the choice of $\rho$ is summarized in the following theorem: 

\begin{theorem}\label{thm: rho optimality}
Let $X\dist P_{Q}^{model}$ as in \ref{eq:bm prob} and $\tilde{X}$ be the noisy image. Then choosing $\rho = log\dfrac{1-\sigma}{\sigma}$ to obtain $X^*_{\rho,\tilde{X}, Q }$ is optimal with respect to maximizing the expected overlap between $X$ and $X^*_{\rho,\tilde{X}, Q }$.
\end{theorem}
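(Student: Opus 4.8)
The plan is to recognize the penalty-model denoiser as a Bayes estimator and to reduce the optimization over $\rho$ to a posterior computation. First I would fix a realization $\tilde{x}$ and apply Bayes' rule to the posterior of the clean image, $P(X = x \mid \tilde{X} = \tilde{x}) \propto P(\tilde{X} = \tilde{x} \mid X = x)\, P_{Q}^{model}(x)$. Because salt-and-pepper noise flips each pixel independently with probability $\sigma$, the noise likelihood factorizes as $P(\tilde{X} = \tilde{x} \mid X = x) = \sigma^{\|x - \tilde{x}\|_1}(1-\sigma)^{\,v - \|x - \tilde{x}\|_1}$, where $\|x - \tilde{x}\|_1$ counts the flipped pixels. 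Taking $-\log$ of the posterior and using $|x_i - \tilde{x}_i| = (x_i - \tilde{x}_i)^2$ for binary entries gives $-\log P(X = x \mid \tilde{x}) = f_Q(x) + \big(\log\tfrac{1-\sigma}{\sigma}\big)\sum_i (x_i - \tilde{x}_i)^2 + C(\tilde{x})$, with $C(\tilde{x})$ independent of $x$. Comparing with (\ref{eq: penalty model}) identifies $\rho^\star = \log\tfrac{1-\sigma}{\sigma}$ as precisely the penalty for which $X^*_{\rho^\star,\tilde{x},Q}$ is the posterior mode, i.e. the MAP estimate; this is also the content I would use for Corollary \ref{cor: map recovery}.

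Next I would connect this to the expected overlap. Since $d(P,P') = n - \mathbb{E}\,\|X - X^*\|_1$, maximizing overlap is equivalent to minimizing expected Hamming loss, which splits over realizations of the noise as $\sum_{\tilde{x}} P(\tilde{X} = \tilde{x})\,\mathbb{E}\big[\|X - X^*_{\rho,\tilde{x},Q}\|_1 \bigm| \tilde{X} = \tilde{x}\big]$. Because the summands are non-negative and the weights $P(\tilde{X} = \tilde{x})$ do not depend on $\rho$, it suffices to minimize the conditional posterior loss $\sum_x P(X = x \mid \tilde{x})\,\|x - X^*_{\rho,\tilde{x},Q}\|_1$ for each $\tilde{x}$ separately, i.e. to choose, for each observed image, the decoded vector that is Bayes-optimal under the posterior computed in the first step. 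Rewriting the per-pixel contribution as $\sum_i P(X_i \neq X^*_i)$ turns the global objective into a sum of per-pixel decision problems whose thresholds I would track as functions of $\rho$.

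The main obstacle is that the penalty QUBO returns the posterior \emph{mode} (a joint MAP over visible \emph{and} hidden nodes), whereas the expected Hamming loss is minimized by the per-pixel posterior-marginal decision, and these two estimators disagree in general. Reconciling them is the real work: I would argue that, under the theorem's standing assumption that the data genuinely follow $P_{Q}^{model}$, the choice $\rho = \log\tfrac{1-\sigma}{\sigma}$ makes the penalty act on each pixel as exactly the log-odds threshold separating "keep $\tilde{x}_i$" from "flip $\tilde{x}_i$" in the Bayes-optimal per-pixel rule, so that within the one-parameter penalty family this $\rho$ is the unique value aligning the decision boundary with the noise statistics. Carefully handling the coupling that $f_Q$ and the marginalization over the hidden units induce between pixels — which is what would otherwise drive a wedge between the joint mode and the marginal decision — is where I expect the analysis to be delicate, and I would either exploit a symmetry/decoupling of the posterior or restrict the comparison explicitly to the penalty family and analyze $d(\rho)$ directly via a pixel-flipping exchange argument.
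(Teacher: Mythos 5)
Your first two steps --- the Bayes computation of the posterior, the identification of $\rho^\star=\log\frac{1-\sigma}{\sigma}$ as the MAP penalty, and the reduction of expected overlap to per-pixel posterior-marginal decisions --- track the paper's setup exactly, and the obstacle you flag at the end is real. But your proposal stops at that obstacle rather than resolving it, and the fallback routes you sketch would not close it: the joint QUBO argmin (a MAP over visible \emph{and} hidden units) genuinely differs from the per-pixel marginal decision rule; no decoupling of the posterior is available once $f_Q$ couples pixels through the hidden layer; and within the one-parameter penalty family it is not true in any obvious way that the overlap of the \emph{mode} estimator is maximized at $\rho=\beta_\sigma$. So as written the proof is incomplete precisely at the step you yourself call ``the real work.''

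What the paper does is sidestep this tension rather than reconcile the two estimators: its proof never analyzes the QUBO minimizer at all. It defines the estimator $\bar{X}_{\rho,\tilde{x},Q}$ by thresholding the posterior mean of each pixel at $1/2$, Eq.~(\ref{eq:average_of_solution}) --- i.e.\ exactly the per-pixel marginal decision rule you identified as Bayes-optimal for Hamming loss --- and proves a Nishimori-type correlation inequality~\cite{H.Nishimori2001} for that estimator: for any binary $\hat{x}_i$ depending only on $\tilde{x}$,
\begin{equation*}
\sum_{\{x\}} P^{\text{post}}_{\beta_\sigma,Q}(x|\tilde{x})\,(2x_i-1)(2\hat{x}_i-1)
\;\le\;
\Bigl|\sum_{\{x\}} P^{\text{post}}_{\beta_\sigma,Q}(x|\tilde{x})\,(2x_i-1)\Bigr|,
\end{equation*}
with equality when $2\hat{x}_i-1$ equals the sign of the posterior-marginal expectation, i.e.\ when the estimator is built from the posterior with $\rho=\beta_\sigma$. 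Averaging over $(\tilde{x},x)$ yields $M_{\beta_\sigma,Q}(\rho)\le M_{\beta_\sigma,Q}(\beta_\sigma)$. Note this argument in fact gives optimality among \emph{all} estimators, which is stronger than the within-family comparison you proposed, and it is consistent with the experiments, where the annealer's reads are averaged and thresholded to approximate $\bar{X}_{\rho,\tilde{x},Q}$. To complete your proof, you should therefore change which object you analyze: replace the argmin of the penalized QUBO by the averaged, thresholded solution, at which point your per-pixel reduction plus the sign inequality finishes the argument; attempting to prove the statement for the joint mode itself is not what the paper does and is likely not provable by the route you outline.
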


\begin{proof}
Let $X\dist P_{Q}^{model}$, and $\tilde{X}$ be $X$ afflicted by salt-and-pepper noise of level $\sigma$. 
 Then since $\tilde{X}_{X, \sigma}$ is obtained by flipping pixels with probability $\sigma$, we have the conditional probability 

\begin{equation}
\begin{split}
P_\sigma(\tilde{X} = \tilde{x}| X = x) = & \prod_{i = 1}^v 
\left\{ \sigma (\tilde{x}_i - x_i)^2 +  (1-\sigma) [ 1 - (\tilde{x}_i - x_i)^2 ] \right\}
\\
=& \frac{\exp \left[ -\beta_\sigma \sum_{i=1}^v (\tilde{x}_i - x_i)^2 \right]}{(1+ e^{-\beta_\sigma})^v}, 
\end{split}
\end{equation}
where $\beta_\sigma := \log \frac{1-\sigma}{\sigma}$.
In order to infer the original image $X$ from the noisy one $\tilde{X}$, 
we utilize the Bayes formula 
and calculate the conditional probability $P_{\beta_\sigma, Q}^{\text{post}}(X = x | \tilde{X} = \tilde{x})$.

\begin{equation}\label{eq: P post}
\begin{split}
P_{\beta_\sigma, Q}^{\text{post}}(x|\tilde{x})
=& \frac{P_\sigma (\tilde{X} = \tilde{x} | X= x) P_{Q}^{model} (x)}
{\sum_{\{ x \}} P_\sigma (\tilde{x} | x) P_{Q}^{model} (x)}
\\
=& \frac{\exp \left[ -\beta_\sigma \sum_{i=1}^v (\tilde{x}_i - x_i)^2 -\sum_{i,j=1}^{v+h} Q_{ij} x_i x_j \right]}
{\sum_{\{x\}} \exp \left[ -\beta_\sigma \sum_{i=1}^v (\tilde{x}_i - x_i)^2 -\sum_{i,j=1}^{v+h} Q_{ij} x_i x_j \right]}.
\end{split}
\end{equation}
Note that $x$ includes pixels for hidden nodes, which is fine here. 
Our approach finds the state which is most likely under this distribution, which is realized 
by annealing for the above QUBO with the $\beta_\sigma$ term.\\

The overlap of two vectors $x^*$ and $x$ is given by
\begin{equation}
m(x, x^*)
:= \frac{1}{v + h} \sum_{i=1}^{v + h} (2 x_i - 1) (2 x^*_i - 1),
\end{equation}
the proportion of shared entries.
We consider the average (over the noise) of solutions, $\bar{X}_{\rho, \tilde{x}, Q}$ with
\begin{equation}
% \mathbb{E}(X^*_{\rho, \tilde{x}, Q})_i
(\bar{X}_{\rho, \tilde{x}, Q})_i
= \theta \left( \sum_{ \{x \}}P_{\tilde{Q}}^{model}(x) x_i  - \frac{1}{2} \right),
\label{eq:average_of_solution} 
\end{equation}
where $\theta(x) = 1$ if $x>0$, otherwise 0, noting that the right hand side represents the inferred pixel value based on the expectation from $P_{\tilde{Q}}^{model}$.
We have formally distinguished $P_{\tilde{Q}}^{model}(x)$ from $P_{\rho, Q}^{\text{post}}(x|\tilde{x})$, 
but in fact they are the same.
Note that
\begin{equation}
2 (\bar{X}_{\rho, \tilde{x}, Q})_i - 1
= \text{sign} \left(\sum_{ \{x \}}P_{\tilde{Q}}^{model}(x)(2 x_i - 1) \right), 
\end{equation}
where $\text{sign}(x)$ is the sign of $x$. Let $\alpha_{\sigma, Q} := - \beta_\sigma \sum_{i} (\tilde{x}_i - x_i)^2 - \sum_{i,j} Q_{ij} x_i x_j$ for conciseness.
In order to evaluate the statistical performance of our method 
with coefficient $\rho$ of penalty term, 
we calcuclate the average of overlap as
\begin{equation}
\begin{split}
M_{\beta_\sigma, Q}(\rho)
:=& \sum_{ \{ \tilde{x} \}, \{x \} } 
P_{\sigma}(\tilde{x} | x) P_{Q}^{model} (x) 
m(\bar{X}_{\rho, \tilde{x}, Q}), x)
\\
=& \frac{1}{(1 + e^{\beta_\sigma})^{v } } \frac{1}{ z} \frac{1}{v + h} \sum_{i}
\sum_{\{ \tilde{x} \}, \{x \}} 
e^{\alpha_{\sigma, Q}} 
[2 (\bar{X}_{\rho, \tilde{x}, Q})_i - 1] (2 x_i - 1).
\end{split}
\end{equation}

A sum in the right hand side of the above equation holds
\begin{equation}
\begin{split}
\sum_{\{x \}}
&e^{\alpha_{\sigma, Q}} 
[2 (\mathbb{E}(X^*_{\rho, \tilde{x}, Q})_i - 1] (2 x_i - 1)
\le  
\left| 
\sum_{\{x \}}
e^{\alpha_{\sigma, Q}} 
[2 (\mathbb{E}(X^*_{\rho, \tilde{x}, Q})_i - 1] (2 x_i - 1)
\right|
\\
\le & 
\left| 
\sum_{\{x \}}
e^{\alpha_{\sigma, Q}} 
(2 x_i - 1)
\right|
=  
\sum_{\{x \}}
e^{\alpha_{\sigma, Q}} 
(2 x_i - 1)
\frac{
\sum_{\{ x' \}}
e^{- \beta_\sigma \sum_{i} (\tilde{x}_i - x'_i)^2 - \sum_{i,j} Q_{ij} x'_i x'_j} 
(2 x'_i - 1)
}
{
\left| 
\sum_{\{ x' \}}
e^{- \beta_\sigma \sum_{i} (\tilde{x}_i - x'_i)^2 - \sum_{i,j} Q_{ij} x'_i x'_j} 
(2 x'_i - 1)
\right|
}
\\
= & 
\sum_{\{x \}}
e^{\alpha_{\sigma, Q}} 
(2 x_i - 1) 
\text{sign} \left(\sum_{ \{x' \}}P_{\tilde{Q}}^{model}(x')(2 x'_i - 1) \right)
= \sum_{\{x \}} 
e^{\alpha_{\sigma, Q}} 
[2 (\bar{X}_{\rho, \tilde{x}, Q})_i - 1] (2 x_i - 1).
\end{split}
\end{equation}
Hence, the averaged overlap holds 
\begin{equation}
\begin{split}
M_{\beta_\sigma, Q}(\rho)
\le& \frac{1}{(1 + e^{\beta_\sigma})^{v } } \frac{1}{Z_{1, Q}} \frac{1}{v + h} \sum_{i}
\sum_{\{ \tilde{x} \}, \{x \}} 
e^{- \beta_\sigma \sum_{i} (\tilde{x}_i - x_i)^2 - \sum_{i,j} Q_{ij} x_i x_j} 
[2 (\bar{X}_{\rho, \tilde{x}, Q})_i - 1] (2 x_i - 1)
\\
= & M_{\beta_\sigma, Q}(\beta_\sigma).
\end{split}
\end{equation}
This inequality means that 
the averaged overlap is maximized when $\rho = \beta_\sigma = \log \frac{1-\sigma}{\sigma}$.
\end{proof}

This theorem is based on a known fact in statistical physics of information processing~\cite{H.Nishimori2001} and translates the fact into the setting of our problem.
Notably, the optimal choice of $\rho$ does {\em not} depend on the distribution of the data, but only on the noise level, for which in many real world cases one may have good estimates. The proof of the theorem also reveals the following corollary: 

\begin{corollary}\label{cor: map recovery}
Under the same assumptions of Theorem \ref{thm: rho optimality}, setting $\rho := \log \frac{1-\sigma}{\sigma}$ makes $X^*_{\rho, \tilde{X}, Q}$ the maximum a posteriori estimator for the original noise-free image $X$. 
\end{corollary}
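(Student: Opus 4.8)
The plan is to read the corollary off almost directly from the posterior distribution already computed in the proof of Theorem \ref{thm: rho optimality}, combined with Claim \ref{claim:penalty model is qubo} and the definition of \ref{alg: main denoising}. First I would recall that the MAP estimator for $X$ given the observation $\tilde{X} = \tilde{x}$ is, by definition, $\arg\max_x P_{\beta_\sigma, Q}^{\text{post}}(x \mid \tilde{x})$, and then substitute the explicit form of this posterior from equation (\ref{eq: P post}).

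Next, since the normalizing constant in the denominator of (\ref{eq: P post}) does not depend on $x$ and the exponential is strictly increasing, maximizing the posterior over $x$ is equivalent to minimizing $\beta_\sigma \sum_i (\tilde{x}_i - x_i)^2 + \sum_{i,j} Q_{ij} x_i x_j$. I would then observe that this is precisely the denoising objective $f_{Q, \tilde{x}, \rho}(x)$ of (\ref{eq: penalty model}) evaluated at $\rho = \beta_\sigma = \log\frac{1-\sigma}{\sigma}$. By Claim \ref{claim:penalty model is qubo} this objective shares its minimizer with the QUBO $f_{\tilde{Q}^{\rho, \tilde{x}}}$, which is exactly the output $X^*_{\rho, \tilde{x}, Q}$ of \ref{alg: main denoising}. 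Chaining these identifications yields $X^*_{\beta_\sigma, \tilde{x}, Q} = \arg\max_x P_{\beta_\sigma, Q}^{\text{post}}(x \mid \tilde{x})$, which is the MAP claim. The whole argument is thus a short chain of equalities, with no estimation or optimization to perform beyond what the earlier results supply.

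The only point that requires care is the role of the hidden nodes: the posterior in (\ref{eq: P post}) is over the full concatenated vector $x = (\vv, \h)$, so the maximization produces the \emph{joint} MAP over visible and hidden variables rather than the marginal MAP of the visible image alone. I would resolve this either by interpreting ``the original noise-free image $X$'' as the full model vector, consistent with the remark in the preceding proof that ``$x$ includes pixels for hidden nodes, which is fine here,'' or by stating explicitly that \ref{alg: main denoising} and the associated QUBO genuinely solve the joint maximization, so the corollary should be read as a joint MAP statement. This interpretive subtlety is the main thing worth flagging; aside from it, the result is an immediate consequence of the posterior derivation and Claim \ref{claim:penalty model is qubo}.
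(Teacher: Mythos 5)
Your proof is correct and takes essentially the same approach as the paper, whose entire argument is the one-line observation that the exponent in the posterior (\ref{eq: P post}) is exactly the penalized objective (\ref{eq: penalty model}) at $\rho = \log\frac{1-\sigma}{\sigma}$, so minimizing the latter is the same as maximizing the former. Your remark about the hidden nodes is a genuine subtlety the paper passes over silently: the QUBO produces the \emph{joint} MAP over visible and hidden units rather than the marginal MAP of the visible image alone, and the corollary is indeed best read as a joint-MAP statement, consistent with the paper's own note that the inclusion of hidden-node pixels ``is fine here.''
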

The corollary follows from observing that the energy function in the numerator of the posterior distribution (\ref{eq: P post}) is exactly (\ref{eq: penalty model}) with $\rho := \frac{1-\sigma}{\sigma}$, noting that minimizing (\ref{eq: penalty model}) is equivalent to maximizing (\ref{eq: P post}). However, this framework allows for additional flexibility in choosing the $\rho$ parameter that is absent in standard MAP estimation. In fact, in sections \ref{sec: robust rho} and \ref{sec: qa testing} we go on to demonstrate that in practice, choosing a larger $\rho$ may be beneficial for robustness of the method.

Though Theorem \ref{thm: rho optimality} derives the optimal choice of $\rho$, it does not give any guarantees that the method will yield an improvement in expected overlap, even under its assumptions. 
 Next, we prove a theorem to show that in the case of visible units being independent of one another, our image denoising method produces in expectation {\em strict} denoising improvements with respect to the expected overlap. 
For $c>0$ and a model distribution $P_{Q}^{model}$ as in \ref{eq:bm prob}, let $\mathcal{I}_c$ be the set of indices $i$ such that $|Q_{ii}| > c$. These indices correspond to components of $X$ that are either 0 or 1 with probability at least $\dfrac{1}{1+e^{-c}}$, depending on whether $Q_{ii}$ is positive or negative, respectively.
\begin{theorem}\label{thm: provable denoising}
Suppose that $Q$ is diagonal, $X \dist P_Q$, and that $\tilde{X}$ is $X$ afflicted by salt-and-pepper noise of level $\sigma$. With $\mathcal{I}_c$ as defined above for $c>0$, setting $\rho \geq \log(\frac{1-\sigma}{\sigma})$, and assuming that $\mathcal{I}_\rho \neq \emptyset$, 
the expected overlap of the denoised image and the true image is strictly larger than the expected overlap of the noisy image and the true image, i.e.
\begin{align}
    \mathbb{E} \left[ \sum \mathbb{I}((X^*_{\rho, \tilde{X}, Q})_i = X_i) \right]&> \mathbb{E} \left[ \sum \mathbb{I}(\tilde{X}_i = X_i) \right].
\end{align}
\end{theorem}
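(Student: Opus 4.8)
The plan is to exploit the fact that a diagonal $Q$ completely decouples the problem into independent single-pixel decisions, so that both the prior $P_Q$ and the denoising step factorize across coordinates. First I would observe that since $Q$ is diagonal, $f_Q(x) = \sum_i Q_{ii} x_i$ (using $x_i = x_i^2$), so $X \dist P_Q$ has independent components with $P(X_i = 1) = (1+e^{Q_{ii}})^{-1}$; in particular the more likely value of $X_i$ is $0$ when $Q_{ii}>0$ and $1$ when $Q_{ii}<0$, matching the description of $\mathcal{I}_c$. By Claim \ref{claim:penalty model is qubo} the denoising matrix $\tilde{Q}^{\rho,\tilde{x}}$ is again diagonal with entries $Q_{ii} + \rho(1-2\tilde{x}_i)$, so its minimizer is computed coordinate-wise: $(X^*_{\rho,\tilde{x},Q})_i = 1$ exactly when $Q_{ii} + \rho(1-2\tilde{x}_i) < 0$.

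Next I would turn this sign condition into an explicit per-pixel rule. Splitting on the value of $\tilde{x}_i$, one finds three regimes: if $Q_{ii} > \rho$ then $X^*_i = 0$ regardless of $\tilde{x}_i$; if $Q_{ii} < -\rho$ then $X^*_i = 1$ regardless of $\tilde{x}_i$; and if $|Q_{ii}| < \rho$ then $X^*_i = \tilde{x}_i$, i.e.\ the pixel is left untouched. Thus the denoiser only overrides the observation on the strongly biased coordinates $i \in \mathcal{I}_\rho$, and on each such coordinate it forces $X^*_i$ to the deterministic MAP value of $X_i$ (independent of the noise), while for $i \notin \mathcal{I}_\rho$ it simply reproduces the noisy pixel.

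With this characterization the two expected overlaps become sums of per-coordinate match probabilities that I can compare termwise. For the noisy image each pixel matches the truth iff it was not flipped, so $P(\tilde{X}_i = X_i) = 1-\sigma$ for every $i$. For the denoised image, on $i \notin \mathcal{I}_\rho$ we have $X^*_i = \tilde{X}_i$ and hence the same value $1-\sigma$, whereas on $i \in \mathcal{I}_\rho$ the match probability equals $\max_b P(X_i = b) = (1+e^{-|Q_{ii}|})^{-1}$. The crux is the elementary inequality
\[
\frac{1}{1+e^{-|Q_{ii}|}} > \frac{1}{1+e^{-\beta_\sigma}} = 1-\sigma,
\]
valid because $i \in \mathcal{I}_\rho$ gives $|Q_{ii}| > \rho \geq \beta_\sigma = \log\frac{1-\sigma}{\sigma}$ and $t \mapsto (1+e^{-t})^{-1}$ is increasing; here the hypothesis $\rho \geq \log\frac{1-\sigma}{\sigma}$ enters precisely to guarantee that overriding a noisy observation on a biased pixel is never worse than trusting it. Summing, the difference of expected overlaps equals $\sum_{i\in\mathcal{I}_\rho}\big((1+e^{-|Q_{ii}|})^{-1} - (1-\sigma)\big)$, which is strictly positive since every summand is positive and $\mathcal{I}_\rho \neq \emptyset$.

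I expect the only delicate points to be bookkeeping rather than genuine obstacles: confirming that the argmin is coordinate-wise and unique on $\mathcal{I}_\rho$ (there $|Q_{ii}| > \rho$ forces the linear coefficient to be nonzero, so no ties arise), and checking that boundary coordinates with $|Q_{ii}| = \rho$ cannot push the overlap below $1-\sigma$ under any tie-breaking (a forced value still matches with probability $(1+e^{-\rho})^{-1} \geq 1-\sigma$). I would also note that the entire argument is per-pixel, so the identity of the index set (visible pixels versus all $v+h$ coordinates) does not affect the conclusion.
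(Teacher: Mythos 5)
Your proposal is correct and follows essentially the same route as the paper's proof: use diagonality to decouple the problem coordinate-wise, show the denoiser forces the MAP value on the strongly biased coordinates $\mathcal{I}_\rho$ and copies the noisy pixel elsewhere, and then compare per-pixel match probabilities, with the hypothesis $\rho \geq \log\frac{1-\sigma}{\sigma}$ entering exactly through the inequality $\frac{1}{1+e^{-|Q_{ii}|}} > \frac{1}{1+e^{-\rho}} \geq 1-\sigma$. If anything, your write-up is slightly more careful than the paper's, since you derive the explicit sign rule $Q_{ii} + \rho(1-2\tilde{x}_i) < 0$ for the coordinate-wise minimizer and address tie-breaking at $|Q_{ii}| = \rho$, which the paper passes over.
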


\begin{proof}
  Let $\mathcal{I}_c^0 := \{i \in \mathcal{I}_c: Q_{ii} >0\}, \mathcal{I}_c^1 := \{i \in \mathcal{I}_c: Q_{ii}<0\}$. Intuitively, these are the indices which are likely to be zero or one, respectively. Further, letting $x^{\dag i}$ denote the vector obtained by flipping entry $i$ of x, we have that $|f_Q(x) - f_Q(x^{\dag i})| =Q_{ii} > c$ if and only if $i \in \mathcal{I}_c$.  Hence, this reveals that $x^*$ solves (\ref{eq: penalty model}) by setting $x^*_i = 1 \phantom{x} \forall i \in \mathcal{I}_\rho^1, x^*_i = 0 \phantom{x} \forall i \in \mathcal{I}_\rho^0, \text{ and } x^*_i = \tilde{x}_i $ otherwise, since the value of $f_Q$ of (\ref{eq:cost}) is reduced by more than $\rho$, so that the overall penalized objective (\ref{eq: penalty model}) improves despite the $\rho$ penalty accrued by the pixel flips.
% \RMcomment{Why is this?}
   \\
Now, let $X \dist P_{Q}^{model}.$ Let us compute $P((X^*_{\rho, \tilde{X}, Q})_i = X_i)$. The cases where this happens are: $i\in I_\rho^0$ and $X_i = 0$,  $i\in I_\rho^1$ and $X_i = 1$, or $i \notin I_\rho$ and pixel $i$ was not flipped by the noise. \\
We know that if $i\in \mathcal{I}_\rho^b, P(X_i = b) \geq \dfrac{1}{1+e^{-\rho}}$, for $b \in \{0,1\}$, so $P((X^*_{\rho, \tilde{X}, Q})_i = X_i) \geq \dfrac{1}{1+e^{-\rho}}$ for these. For $i \notin \mathcal{I}_\rho, P((X^*_{\rho, \tilde{X}, Q})_i = X_i) = 1- \sigma$, where $\sigma$ is the probability that the pixel was flipped by the noise. 
On the other hand, $P(\tilde{X}_i = X_i) = 1-\sigma \phantom{x} \forall i.$ We characterize 
\begin{align}
    \mathbb{E} \left[ \sum \mathbb{I}((X^*_{\rho, \tilde{X}, Q})_i = X_i) \right]&> \mathbb{E} \left[ \sum \mathbb{I}(\tilde{X}_i = X_i) \right] \label{ineq: restore well}\\
    \sum P((X^*_{\rho, \tilde{X}, Q})_i = X_i) &> \sum P (\tilde{X}_i = X_i) = n\cdot (1-\sigma)
\end{align}
For the left-hand side, assuming $\mathcal{I}_\rho \neq \emptyset$, we have

\begin{align*}
    \sum P((X^*_{\rho, \tilde{X}, Q})_i = X_i) > \sum_{i \in \mathcal{I}_\rho} \dfrac{1}{1+e^{-\rho}} + \sum_{i\notin \mathcal{I}_\rho}(1-\sigma)
    = |\mathcal{I}_\rho|\cdot\dfrac{1}{1+e^{-\rho}} + (n-|\mathcal{I}_\rho|)(1-\sigma)
\end{align*}
so that (\ref{ineq: restore well}) holds when  

\begin{align}
    &|\mathcal{I}_\rho|\cdot\dfrac{1}{1+e^{-\rho}} + (n-|\mathcal{I}_\rho|)(1-\sigma) \geq n(1-\sigma) \\
    &\iff |\mathcal{I}_\rho| \neq 0 \text{ and } \dfrac{1}{1+e^{-\rho}} \geq 1-\sigma \iff \rho \geq log(\dfrac{1-\sigma}{\sigma}) \text{ and } \mathcal{I}_\rho \neq \emptyset,
\end{align}
and the theorem is proven.
\end{proof}
The assumption that matrix $Q$ is diagonal is equivalent to the components of $X$ being independent, which is not realistic with real data. However, since in the RBM model the visible units are independent conditioned on the hidden units, we still consider this independent case to be informative to the denoising method. In fact, if the hidden states were fixed (or known, or recovered correctly), Theorem \ref{thm: provable denoising} would apply. We leave it as a tantalizing open question to generalize this result beyond the independent case.
The assumption of nonemptiness of $\mathcal{I}_\rho$ is a natural one for the denoising task; indeed, when $\mathcal{I}_\rho$ is empty, no entries of Q are large in magnitude, which is equivalent to the entries of $X$ being close to uniformly distributed. In that case, intuitively of course it should not be possible to guarantee that we can denoise an image well if it looks like noise to begin with.

\subsection{Robust Choice of $\rho$}\label{sec: robust rho}
The optimal choice of $\rho$ as derived in Theorem \ref{thm: rho optimality} relies on the assumption that the observed data comes from the learned distribution, or equivalently that the distribution generating our data has been perfectly learned by the RBM. However, in practice we will always only approximately learn the data distribution. Hence, we do not want to rely too heavily on the exact distribution we have learned when we denoise the images. One may hope to have a more robust method by only changing the value of a pixel when there is some confidence in the model that the pixel should be flipped.
We may thus want to penalize flipping pixels slightly more than we should under the idealistic setting of Theorem \ref{thm: rho optimality}, which corresponds to choosing a larger $\rho$ value than $\log \frac{1-\sigma}{\sigma}$, or equivalently using a smaller $\sigma' < \sigma$ value when setting $\rho := \log \frac{1-\sigma'}{\sigma'}$. We opt for the latter as a means of intentionally biasing $\rho$ to make the approach more robust for application. Figures \ref{fig:MNIST sigma comparisons} and \ref{fig:BAS sigma comparisons} in Section \ref{results} show the effect this proposed robustness modification has, demonstrating indeed that choosing a larger $\rho$ via intentionally using a smaller $\sigma$ yields positive results. If the true noise level is $\sigma$, our experiments demonstrate that setting to roughly $\rho := \frac{1-0.75\sigma}{0.75\sigma}$ has a positive effect on performance.

\begin{section}{Empirical Results}\label{results}

This section contains results from implementing the previously described method and comparing it against other denoising approaches. Datasets and code are available on the first author's GitHub page for the purpose of easy reproducibility. 

\subsection{Results with Quantum Annealing}\label{sec: qa testing}

In this subsection, we present empirical results obtained by implementing our model on a quantum annealer, 
D-Wave's Advantage\_system4.1, 
which has 5000 qubits 
and enables embedding of a complete bipartite graph of size $172\times172$. 
Hence, we use $12 \times 12$ pixel images here so that the visible layer is of size 144. 
We test the method on two different datasets with very differently structured data. 
The first dataset is a $12 \times 12$ version of the well-known MNIST dataset \cite{mnist}, created by downsizing the original dataset with nearest-neighbor image downscaling and binarizing pixels. 
The second dataset we use is a $12 \times 12$ pixel Bars-and-Stripes (BAS) dataset, as has been used in closely related work \cite{koshka_reconstruction, Dixit_2021}, where an 8x8 version was used to accomodate a smaller 2000 qubit machine, D-Wave 2000Q used there. Each image consists of binary pixels with either each row or each column sharing the same values, so that each image consists of either ``bars" or ``stripes". \\
\begin{wrapfigure}{l}{0.5\textwidth}
\includegraphics[width = 6cm]{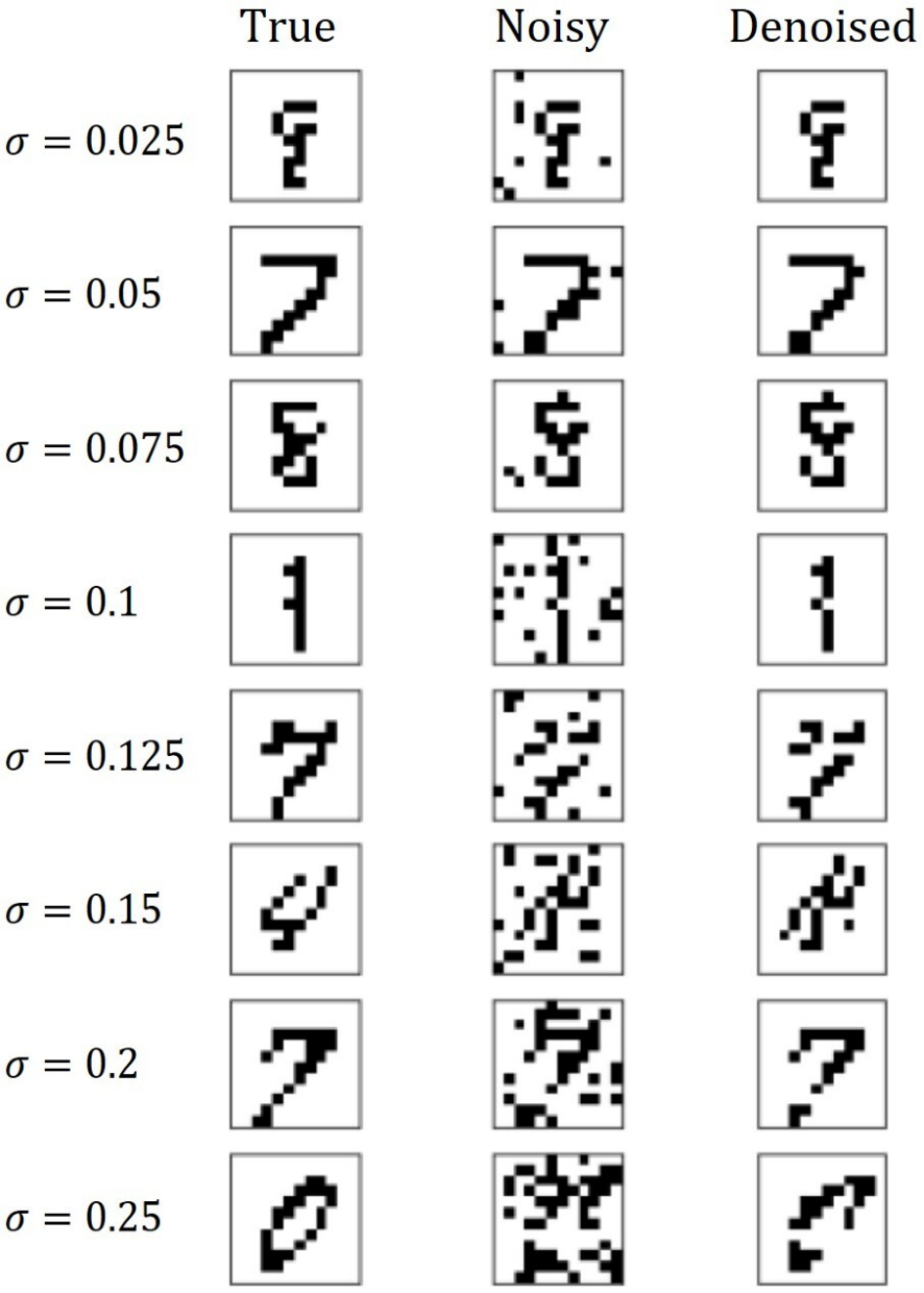}
    \captionof{figure}{Examples of the denoising process using our method showing the true, noisy, and denoised images across different noise levels.}
    \label{fig: denoising visualization}
\end{wrapfigure}
For both datasets we train the RBM by using the classical Contrastive Divergence algorithm first presented in \cite{contrastive_divergence_Hinton2002}. The number of hidden units was set to 50 and 64 for BAS and MNIST, respectively. For the BAS data, 4000 images were generated as training data, and 1000 as test data, while for MNIST, we simply used the full MNIST provided training set of 60,000 images and test set of 10,000 images. 
Noisy images were generated by adding salt-and-pepper noise of level $\sigma$ to images from the test dataset.
Given a noisy image, we are then able to embed and solve the resulting denoising QUBO of \ref{eq: penalty reconstruct} onto a D-Wave quantum annealer, Advantage\_system4.1. 
A function of D-Wave's Ocean softoware, find\_embedding, is utilized to find appropriate mappings from variables in a QUBO to physical qubits on D-Wave's Pegasus graph.
A variable in QUBO is often mapped to multiple physical qubits, called chain, that are strongly connected to each other to behave like a single variable. 
A mapping can be used for every noisy images for each dataset, since their QUBO have the same graph stracuture.
We have prepared in advance 50 sets of the different mappings for each dataset 
and choose a mapping from the pool at random to embed QUBO of each image.
This random selection is done to avoid possible artificial effects on the denoising performance from using only a particular mapping.
Parameters for embedding and annealing, i.e., chain\_strength and annealing\_time, are tuned to maximize the performance.
In particular, we set chain\_strength as the product of a coefficient $c_0$ and the maximum abstract value among the elements of each QUBO matrix, 
where we tune $c_0$.
The adopted values of the parameters are different between MNIST and BAS but the same values for all the range of $\sigma$.
We set ($c_0$, annealing\_time) = (0.6, 50 $\mu$s), (0.5, 40 $\mu$s) for BAS and MNIST, respectively.
The number num\_reads of reads of annealing is 100 for each noisy image.
We calculate the average of solution of each pixel over the reads to approximate Eq.~(\ref{eq:average_of_solution})
and use it to evaluate the overlap that is proportion of pixels in denoised images that matched the original image.
We denoise 200 noisy images for each $\sigma$, which are randomly selected from the pool of test images for each sigma. Note also that for each value of sigma, the different methods compared use the same set of (randomly selected) noisy test images.
Figures \ref{fig:MNIST sigma comparisons} and \ref{fig:BAS sigma comparisons} first investigate the robust choice of $\rho$ as discussed in Section \ref{sec: robust rho}. This is done by using a biased value of $\sigma$ when setting $\rho = \log \frac{1-\sigma}{\sigma}$, instead setting $\rho := \log \frac{1-b\sigma}{b\sigma}$ for some bias factor $b$. The denoising performance for $b\in \{1.25, 1, 0.75, 0.5\}$ are shown, with 95\% confidence intervals obtained by bootstrapping. Note that using a bias factor $b = 1$ means using the true value of $\sigma$ for determining $\rho$.

\setlength{\belowcaptionskip}{10pt}
\begin{minipage}{.45\textwidth}
  \centering
  \includegraphics[width = 7cm]{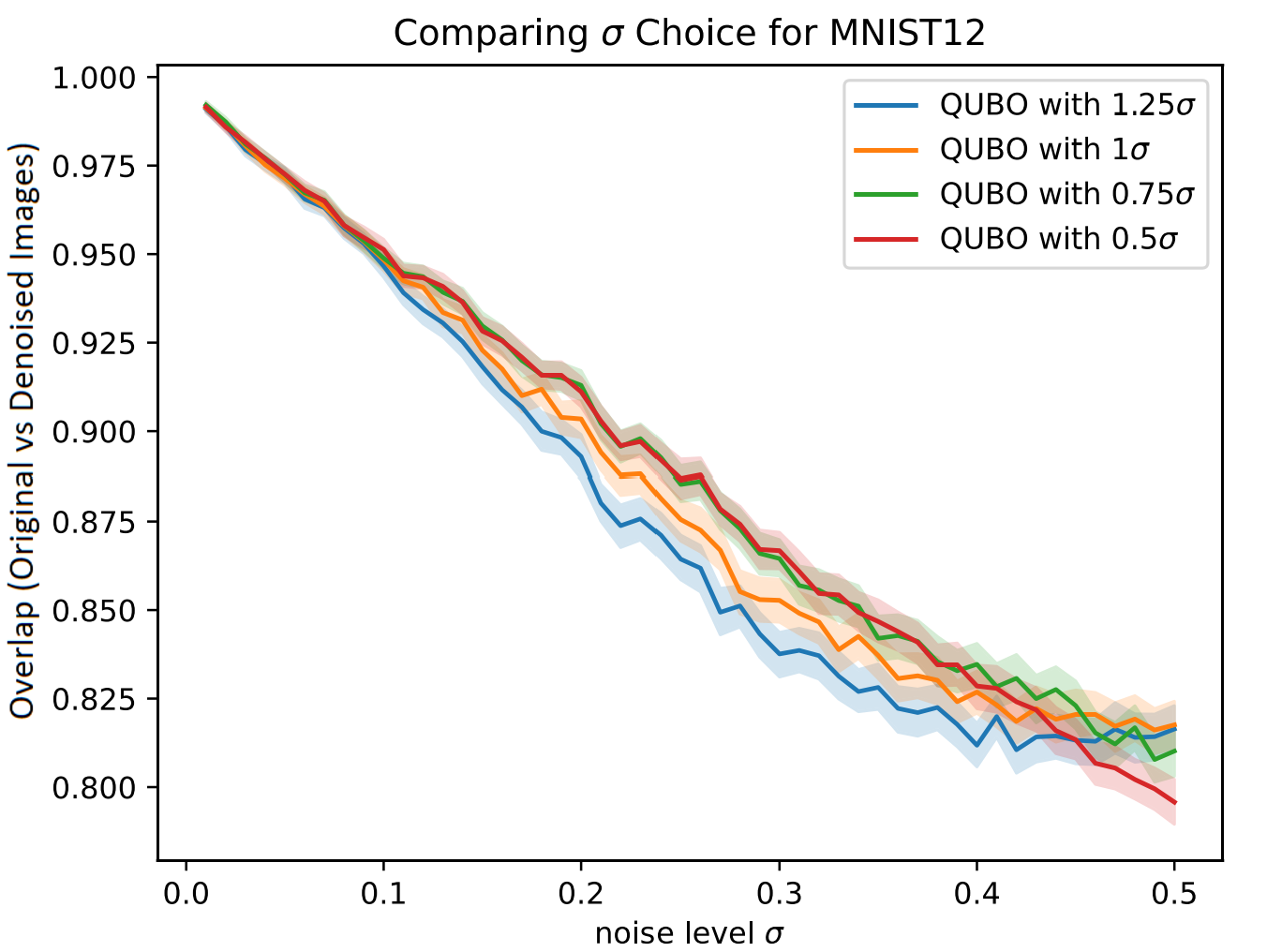}
  \captionof{figure}{Proportion of pixels in denoised MNIST images that matched the original image, for different denoising methods with 95\% CI error bars.  }
  \label{fig:MNIST sigma comparisons}
\end{minipage}
\begin{minipage}{.45\textwidth}
  \centering
  \includegraphics[width = 7cm]{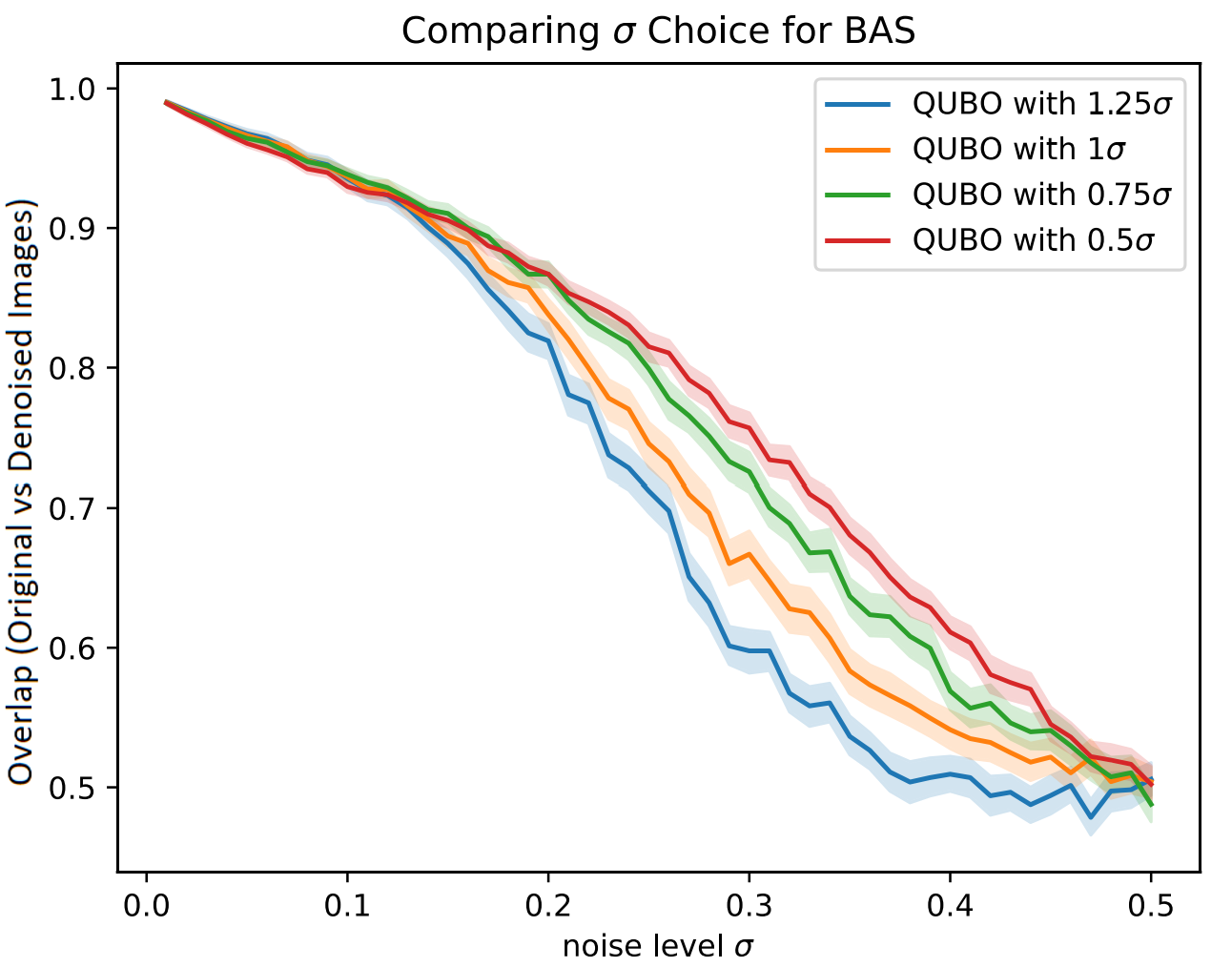}
    \captionof{figure}{Proportion of pixels in denoised BAS images that matched the original image, for different denoising methods with 95\% CI error bars.}
    \label{fig:BAS sigma comparisons}
\end{minipage}%

Based on the empirical performance, using a bias factor of around 0.75 seems to give an improved performance compared to using a bias factor of 1 in both data sets. A bias factor of 0.5 seems to perform quite well across most noise regimes as well, with largely overlapping confidence regions to the 0.75 parameter setting, though in the low-noise setting for the BAS dataset we observe an adverse effect. The authors thus suggest a setting of 0.75 for the bias factor.

Next, in figures \ref{fig:MNIST comparisons} and \ref{fig:BAS comparisons}, we compare our method to popular other denoising methods for binary images on the $12 \times 12$ MNIST and bars-and-stripes datasets, respectively, across different noise levels. When comparing to other methods, a crucial factor is that we choose $\rho$ based off of $\sigma$, but in practice $\sigma$ may be unknown. In light of this, we include two versions of our method in these comparisons. First, we use our method with $\rho := \log \frac{1-\sigma}{\sigma}$, using the true value of $\sigma$ without introducing the recommended bias factor. Secondly, we simulate the situation in which the true $\sigma$ is unknown, and instead we only have a guess for $\sigma$. 
To simulate having an approximate guess for $\sigma$, for each image afflicted by noise of level $\sigma$, we sample $\sigma'$ uniformly from an interval of size $\sigma/2$ centered at sigma. 
We then set $\rho := \log \frac{1-0.75\sigma'}{0.75\sigma'}$, using a bias factor of 0.75 on with this ``guessed" value of $\sigma$. This is a significantly more realistic way of testing our method, since it gives an idea of how well the method may perform when the true noise level present in the noisy images is unknown and must be guessed. Our implementation here only assumes that the practitioner roughly knows the magnitude of the noise. For example, if the true noise is $\sigma = 0.2$, here we sample $\sigma'$ uniformly from $[0.15, 0.25]$ to simulate the guess.

We compare our method to Gibbs denoising with an RBM \cite[section 3.2]{TangHinton2012RobustBM_denoising}, median filtering \cite{Huang1979_median_filter}, Gaussian filtering \cite[chapter 5]{shapiro2001_computervision}, and a graph-cut method \cite{Greig1989_graph_cut_denoising} for denoising. For the Gibbs denoising, we use the same well-trained RBM as for our QUBO-based method, and parameters of the method were carefully tuned for best performance to use 20 Gibbs iterations to then construct the denoised image as the exponentially weighted average of the samples with decay factor 0.8. For the graph-cut method, the recommended parameter setting in the reference of $\beta = 0.5$ is used. 

%extra minipages for right spacing
\setlength{\belowcaptionskip}{10pt}
\begin{minipage}{.45\textwidth}
  \centering
  \includegraphics[width = 7cm]{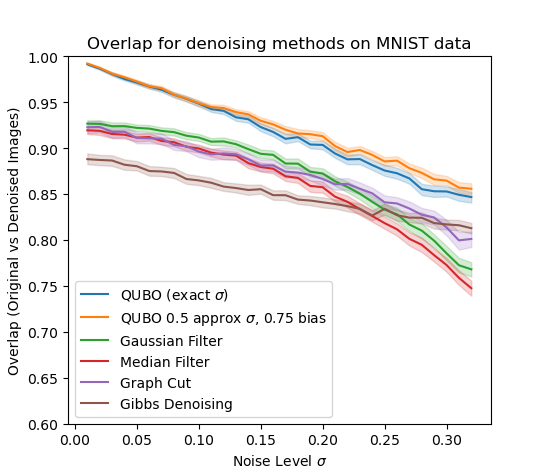}
  \captionof{figure}{Proportion of pixels in denoised MNIST images that matched the original image, for different denoising methods with 95\% CI error bars.  }
  \label{fig:MNIST comparisons}
\end{minipage}
\begin{minipage}{.45\textwidth}
  \centering
  \includegraphics[width = 7cm]{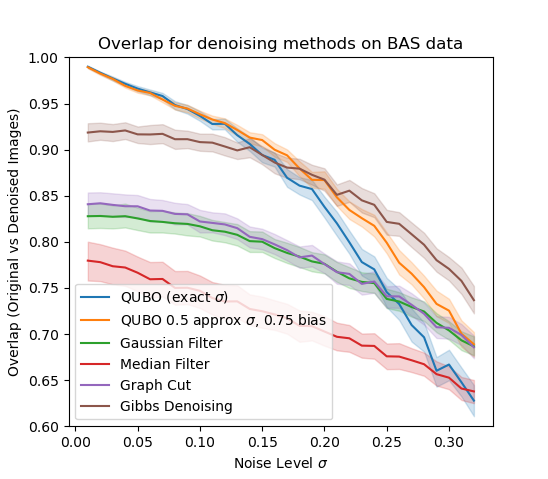}
    \captionof{figure}{Proportion of pixels in denoised BAS images that matched the original image, for different denoising methods with 95\% CI error bars.}
    \label{fig:BAS comparisons}
\end{minipage}%

Overall, the QUBO-based method performs quite strongly. 
Across all noise regimes in the MNIST data, and in most noise regimes in the bars-and-stripes dataset, the method outperforms the others. 
In particular, for the MNIST data the 95\% confidence region for the QUBO method entirely dominates the others. 
Indeed, we see the good performance that our analysis from Section \ref{sec: methods} suggests, even when the true $\sigma$ is unknown and instead guessed. Using a guessed $\sigma$ and the robustness modification of Section \ref{sec: robust rho} makes the method perform as well (if not slightly better) as knowing the true $\sigma$ without the robustness modification.
Only in the noise regime of $\sigma \geq 0.2$ in the BAS data does Gibbs denoising outperform our method.
In Figure \ref{fig: denoising visualization}, we also provide examples of applying our denoising method to noisy images across different noise levels.

\subsection{Testing on Larger Images}\label{sec: mnist testing}

Though we see the the straightforward implementability of our method on quantum annealers as a strong positive, a current drawback on using QAs is the limited data size that can be handled to accomodate their still small qubit capacities. Of course we can still instead test our method on larger datasets by obtaining solutions to the denoising QUBO \ref{eq: penalty model} using other means. In Figure \ref{fig: full size mnist comparisons}, we implement our method on a binarized version of the popular MNIST dataset \cite{mnist} by using simulated annealing \cite{kirkpatrick_SAN_1983} to find solutions to (\ref{eq: penalty model}). We particularly choose to test on the full-size MNIST dataset since we could only use a downscaled version on the QA due to size limitations on the input data, so this experiment serves to test our method without this downscaling. All methods are implemented as described in \ref{sec: qa testing}, and again for our method we use a guessed $\sigma$ to simulate the unknown $\sigma$ case and bias the guess for robustness.

\centering
  \includegraphics[height = 7cm]{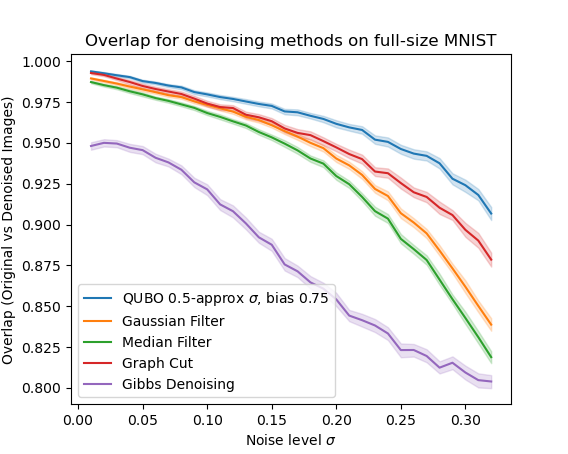}
  \captionof{figure}{Proportion of pixels in denoised images that were correctly denoised, for different denoising methods on the MNIST dataset, with $95\%$ confidence intervals shaded.}
  \label{fig: full size mnist comparisons}

\end{section}
 
\section{Conclusion and Future Work}
We investigated an image denoising framework via a penalty-based QUBO denoising objective that shows promise both theoretically through its statistical properties and practically through its empirical performance together with the proposed robustness modification. The method is well-suited for implementability on a quantum annealer, providing an important application of QAs within machine learning through the fundamental image denoising task. Good results are still obtained on larger datasets when the QUBO is only classically approximated by simulated annealing instead, revealing the approach to be promising even in the absence of QAs. As RBMs form a core building block of many deep generative models such as deep Boltzmann machines or deep belief networks \cite{deep_learning_bengio}, a natural next step is to attempt to incorporate this approach into these more complex models, though current hardware limitations on existing quantum annealers are restrictive. Further, since our method takes advantage of QAs for the denoising step, further research into making use of QAs for the training process of RBMs would yield a full image denoising model where both the model training and image denoising make use of QA.

\section*{Funding}
PK was supported in part by g-RIPS Sendai, Cyberscience Center at Tohoku Univ., and NEC Japan, in early stages of the work. PK is grateful to the USRA Feynman Academy internship program, support from the NASA Academic Mission Services (contract NNA16BD14C), and funding from DARPA under DARPA-NASA agreement SAA2-403688.

\section*{Acknowledgments}
The early stage of this work is based on the work in the g-RIPS Sendai 2021 program. 
The authors thank Y. Araki, E. Escobar, T. Mihara, V. Q. H. Huynh, H. Kodani, A. T. Lin, M. Shirane, Y. Susa, and H. Suito for collaboration in the program.
The authors also acknowledge H. Kobayashi and M. Sato for the use of the computing evironment in the program. P.K. thanks Y. Sukurdeep for helpful feedback and discussions.

{\small
 \setcitestyle{numbers}
 \bibliography{qubocite}

\begin{thebibliography}{26}
\providecommand{\natexlab}[1]{#1}
\providecommand{\url}[1]{\texttt{#1}}
\expandafter\ifx\csname urlstyle\endcsname\relax
  \providecommand{\doi}[1]{doi: #1}\else
  \providecommand{\doi}{doi: \begingroup \urlstyle{rm}\Url}\fi

\bibitem[Adachi and Henderson(2015)]{adachi2015_QA_for_DeepNets}
Steven~H. Adachi and Maxwell~P. Henderson.
\newblock Application of quantum annealing to training of deep neural networks,
  2015.

\bibitem[Albash and Lidar(2018)]{T.Albash2018Jan}
Tameem Albash and Daniel~A. Lidar.
\newblock {Adiabatic Quantum Computing}.
\newblock \emph{Rev. Mod. Phys.}, 90\penalty0 (1):\penalty0 15002, 2018.
\newblock ISSN 1539-0756.
\newblock \doi{10.1103/RevModPhys.90.015002}.
\newblock URL
  \url{http://arxiv.org/abs/1611.04471{\%}0Ahttp://dx.doi.org/10.1103/RevModPhys.90.015002}.

\bibitem[Barahona(1982)]{F.Barahona1982}
Francisco Barahona.
\newblock {On the computational complexity of Ising spin glass models}.
\newblock \emph{J. Phys. A: Math. Gen.}, 15:\penalty0 3241--3253, 1982.
\newblock ISSN 0305-4470.
\newblock \doi{10.1088/0305-4470/15/10/028}.
\newblock URL \url{http://iopscience.iop.org/0305-4470/15/10/028}.

\bibitem[Boyat and Joshi(2015)]{Noise_models_in_images_review}
Ajay~Kumar Boyat and Brijendra~Kumar Joshi.
\newblock A review paper: Noise models in digital image processing.
\newblock \emph{ArXiv}, abs/1505.03489, 2015.
\newblock URL \url{https://api.semanticscholar.org/CorpusID:13965372}.

\bibitem[Buades et~al.(2005)Buades, Coll, and
  Morel]{image_denoising_survey_buades2005}
Antoni Buades, Bartomeu Coll, and Jean-Michel Morel.
\newblock A review of image denoising algorithms, with a new one.
\newblock \emph{Multiscale modeling \& simulation}, 4\penalty0 (2):\penalty0
  490--530, 2005.

\bibitem[Cho(2013)]{BM_denoising}
Kyunghyun Cho.
\newblock Boltzmann machines and denoising autoencoders for image denoising,
  2013.
\newblock URL \url{https://arxiv.org/abs/1301.3468}.

\bibitem[Das and Chakrabarti(2008)]{A.Das2008}
Arnab Das and Bikas~K. Chakrabarti.
\newblock \textit{Colloquium} : Quantum annealing and analog quantum
  computation.
\newblock \emph{Rev. Mod. Phys.}, 80:\penalty0 1061--1081, Sep 2008.
\newblock \doi{10.1103/RevModPhys.80.1061}.
\newblock URL \url{http://link.aps.org/doi/10.1103/RevModPhys.80.1061}.

\bibitem[Dixit et~al.(2021)Dixit, Selvarajan, Alam, Humble, and
  Kais]{Dixit_2021}
Vivek Dixit, Raja Selvarajan, Muhammad~A. Alam, Travis~S. Humble, and Sabre
  Kais.
\newblock Training restricted boltzmann machines with a d-wave quantum
  annealer.
\newblock \emph{Frontiers in Physics}, 9, 2021.
\newblock ISSN 2296-424X.
\newblock \doi{10.3389/fphy.2021.589626}.
\newblock URL
  \url{https://www.frontiersin.org/article/10.3389/fphy.2021.589626}.

\bibitem[Glover et~al.(2018)Glover, Kochenberger, and
  Du]{QuboFormulationTutorial_glover2018}
Fred Glover, Gary Kochenberger, and Yu~Du.
\newblock A tutorial on formulating and using qubo models, 2018.
\newblock URL \url{https://arxiv.org/abs/1811.11538}.

\bibitem[Goodfellow et~al.(2016)Goodfellow, Bengio, and
  Courville]{deep_learning_bengio}
Ian~J. Goodfellow, Yoshua Bengio, and Aaron Courville.
\newblock \emph{Deep Learning}.
\newblock MIT Press, Cambridge, MA, USA, 2016.
\newblock \url{http://www.deeplearningbook.org}.

\bibitem[Greig et~al.(1989)Greig, Porteous, and
  Seheult]{Greig1989_graph_cut_denoising}
Darryl Greig, Baroness Porteous, and Allan~H. Seheult.
\newblock Exact maximum a posteriori estimation for binary images.
\newblock \emph{Journal of the royal statistical society series
  b-methodological}, 51:\penalty0 271--279, 1989.

\bibitem[Hinton(2002)]{contrastive_divergence_Hinton2002}
Geoffrey~E. Hinton.
\newblock Training products of experts by minimizing contrastive divergence.
\newblock \emph{Neural Comput.}, 14\penalty0 (8):\penalty0 1771–1800, aug
  2002.
\newblock ISSN 0899-7667.
\newblock \doi{10.1162/089976602760128018}.
\newblock URL \url{https://doi.org/10.1162/089976602760128018}.

\bibitem[Huang et~al.(1979)Huang, Yang, and Tang]{Huang1979_median_filter}
Thomas~S. Huang, G~Yang, and G.~Tang.
\newblock A fast two-dimensional median filtering algorithm.
\newblock \emph{IEEE Transactions on Acoustics, Speech, and Signal Processing},
  27:\penalty0 13--18, 1979.

\bibitem[Johnson et~al.(2011)Johnson, Amin, Gildert, Lanting, Hamze, Dickson,
  Harris, Berkley, Johansson, Bunyk, Chapple, Enderud, Hilton, Karimi,
  Ladizinsky, Ladizinsky, Oh, Perminov, Rich, Thom, Tolkacheva, Truncik,
  Uchaikin, Wang, Wilson, and Rose]{M.Johnson2011}
M.~W. Johnson, M.~H.~S. Amin, S.~Gildert, T.~Lanting, F.~Hamze, N.~Dickson,
  R.~Harris, A.~J. Berkley, J.~Johansson, P.~Bunyk, E.~M. Chapple, C.~Enderud,
  J.~P. Hilton, K.~Karimi, E.~Ladizinsky, N.~Ladizinsky, T.~Oh, I.~Perminov,
  C.~Rich, M.~C. Thom, E.~Tolkacheva, C.~J.~S. Truncik, S.~Uchaikin, J.~Wang,
  B.~Wilson, and G.~Rose.
\newblock {Quantum annealing with manufactured spins}.
\newblock \emph{Nature (London)}, 473\penalty0 (7346):\penalty0 194--198, may
  2011.
\newblock ISSN 0028-0836.
\newblock \doi{10.1038/nature10012}.
\newblock URL \url{http://www.nature.com/doifinder/10.1038/nature10012}.

\bibitem[Kadowaki and Nishimori(1998)]{T.Kadowaki1998}
Tadashi Kadowaki and Hidetoshi Nishimori.
\newblock Quantum annealing in the transverse ising model.
\newblock \emph{Phys. Rev. E}, 58:\penalty0 5355--5363, Nov 1998.
\newblock \doi{10.1103/PhysRevE.58.5355}.
\newblock URL \url{http://link.aps.org/doi/10.1103/PhysRevE.58.5355}.

\bibitem[Kirkpatrick et~al.(1983)Kirkpatrick, Gelatt, and
  Vecchi]{kirkpatrick_SAN_1983}
S.~Kirkpatrick, C.~D. Gelatt, and M.~P. Vecchi.
\newblock Optimization by simulated annealing.
\newblock \emph{Science}, 220\penalty0 (4598):\penalty0 671--680, 1983.
\newblock ISSN 00368075.
\newblock \doi{10.1126/science.220.4598.671}.
\newblock URL \url{http://www.jstor.org/stable/1690046}.

\bibitem[Koshka and Novotny(2021)]{koshka_reconstruction}
Yaroslav Koshka and Mark~A. Novotny.
\newblock Comparison of use of a 2000 qubit d-wave quantum annealer and mcmc
  for sampling, image reconstruction, and classification.
\newblock \emph{IEEE Transactions on Emerging Topics in Computational
  Intelligence}, 5\penalty0 (1):\penalty0 119--129, 2021.
\newblock \doi{10.1109/TETCI.2018.2871466}.

\bibitem[Krzysztof et~al.(2021)Krzysztof, Mateusz, Marek, and
  Rafał]{Krzysztof2021_QA_RBM_for_MNIST_classification}
Kurowski Krzysztof, Slysz Mateusz, Subocz Marek, and R{\'o}życki Rafał.
\newblock Applying a quantum annealing based restricted boltzmann machine for
  mnist handwritten digit classification.
\newblock 2021.

\bibitem[LeCun et~al.(2010)LeCun, Cortes, and Burges]{mnist}
Yann LeCun, Corinna Cortes, and CJ~Burges.
\newblock Mnist handwritten digit database.
\newblock \emph{ATT Labs [Online]. Available:
  http://yann.lecun.com/exdb/mnist}, 2, 2010.

\bibitem[Lucas(2014)]{A.Lucas2014}
Andrew Lucas.
\newblock {Ising formulations of many NP problems}.
\newblock \emph{Front. Phys.}, 2:\penalty0 5, 2014.
\newblock ISSN 2296-424X.
\newblock \doi{10.3389/fphy.2014.00005}.
\newblock URL
  \url{http://journal.frontiersin.org/article/10.3389/fphy.2014.00005/abstract}.

\bibitem[Nishimori(2001{\natexlab{a}})]{H.Nishimori2001}
Hidetoshi Nishimori.
\newblock \emph{Statistical Physics of Spin Glasses and Information Processing:
  An Introduction}.
\newblock Oxford University Press, New York, 2001{\natexlab{a}}.

\bibitem[Nishimori(2001{\natexlab{b}})]{StatisticalPhysics}
Hidetoshi Nishimori.
\newblock \emph{Statistical Physics of Spin Glasses and Information Processing:
  An Introduction}.
\newblock Oxford University Press, New York, 2001{\natexlab{b}}.
\newblock URL
  \url{http://www.stat.ucla.edu/~ywu/research/documents/BOOKS/SpinGlassInformation.pdf}.

\bibitem[Rudin et~al.(1992)Rudin, Osher, and Fatemi]{rudin1992nonlinear}
Leonid~I Rudin, Stanley Osher, and Emad Fatemi.
\newblock Nonlinear total variation based noise removal algorithms.
\newblock \emph{Physica D: nonlinear phenomena}, 60\penalty0 (1-4):\penalty0
  259—268, 1992.

\bibitem[Stockman and Shapiro(2001)]{shapiro2001_computervision}
George Stockman and Linda~G. Shapiro.
\newblock \emph{Computer Vision}.
\newblock Prentice Hall PTR, 2001.
\newblock ISBN 0130307963.

\bibitem[Tang et~al.(2012)Tang, Salakhutdinov, and
  Hinton]{TangHinton2012RobustBM_denoising}
Yichuan Tang, Ruslan Salakhutdinov, and Geoffrey~E. Hinton.
\newblock Robust boltzmann machines for recognition and denoising.
\newblock \emph{2012 IEEE Conference on Computer Vision and Pattern
  Recognition}, pages 2264--2271, 2012.

\bibitem[Vahdat(2017)]{Vahdat2017TowardRA_learning_from_noisy}
Arash Vahdat.
\newblock Toward robustness against label noise in training deep discriminative
  neural networks.
\newblock In \emph{NIPS}, 2017.
\newblock URL \url{https://api.semanticscholar.org/CorpusID:9963515}.

\end{thebibliography}
}

\end{document}